\newtheorem{theorem}{Theorem}[section]
\newtheorem{lemma}[theorem]{Lemma}
\DeclareMathOperator{\sgn}{sgn}
\newcommand{\ol}[1]{\overline{#1}}
\newcommand{\lr}[1]{\left( #1 \right)}
\newcommand{\Lr}[1]{\left\{ #1 \right\}}
\newcommand{\Z}{\mathbb{Z}}
\newcommand{\R}{\mathbb{R}}
\newcommand{\C}{\mathbb{C}}
\newcommand{\T}{\mathbb{T}}
\newcommand{\cK}{\mathcal{K}}
\newcommand{\cH}{\mathcal{H}}
\renewcommand{\Re}{\mathrm{Re}\,}
\renewcommand{\Im}{\mathrm{Im}\,}
\newcommand{\ind}{\mathrm{ind}\,}
\newcommand{\ess}{\sigma_{\mathrm{ess}}}
\newcommand{\sign}{\mathrm{sgn}\,}
\newcommand{\Usuz}{U_{\textnormal{suz}}}
\newcommand{\suz}{\textnormal{suz}}
\newcommand{\textbi}[1]{\textit{{#1}}}
	\pgfplotsset{compat=1.12}
\newcommand{\sigmap}{\sigma_{\mathrm{p}}}
\title{Spectral Analysis of Non-unitary Two-phase Quantum Walks in one dimension}
\author{Chusei Kiumi}
\address{Graduate School of Engineering Science, Yokohama National University, Hodogaya, Yokohama, 240-8501, Japan}
\email{kiumi-chusei-bf@ynu.jp}
\author{Kei Saito}
\address{Department of Information Systems Creation, Faculty of Engineering, Kanagawa University, Kanagawa, Yokohama, 221-8686, Japan}
\email{ft102130ev@jindai.jp}
\author{Yohei Tanaka}
\address{Department of Mathematics, Shinshu university, Matsumoto 390-8621, Japan}
\email{tana35@shinshu-u.ac.jp}
\begin{document}

\begin{abstract}
It is recently shown by Asahara-Funakawa-Seki-Tanaka that existing index theory for chirally symmetric (discrete-time) quantum walks can be extended to the setting of non-unitary quantum walks. More precisely, they consider a certain non-unitary variant of the two-phase split-step quantum walk as a concrete one-dimensional example, and give a complete classification of the associated index in their study. Note, however, that it remains uncertain whether or not their index gives a lower bound for the number of so-called topologically protected bound states unlike the setting of unitary quantum walks. In fact, the spectrum of a non-unitary operator can be any subset of the complex plane, and so the definition of such bound states is ambiguous in the non-unitary case. The purpose of the present article is to show that the simple use of transfer matrices naturally allows us to obtain an explicit formula for a topologically bound state associated with the non-unitary split-step quantum walk model mentioned above.
\end{abstract}

\maketitle

\keywords{
Non-unitary quantum walk, Chiral symmetry, Split-step quantum walk, Bulk-edge correspondence, Transfer matrix}

\section{Introduction}
\label{section: introduction}

The major theme of the present article belongs to the broad class of spectral analysis for non-unitary quantum walks. Quantum walk theory is a quantum-mechanical counterpart of the classical random walk theory \cite{Gudder-1988,Aharonov-Davidovich-Zagury-1993,Meyer-1996,Ambainis-Bach-Nayak-Vishwanath-Watrous-2001}, and many useful applications of this ubiquitous concept can be found in \cite{Portugal-2013}. From a purely mathematical point of view, we may regard (discrete-time) quantum walks as periodically driven systems. Here, the one-step evolution after one driving period, known as a \textbi{time-evolution operator}, is given by a fixed unitary operator $U$ on a Hilbert space $\cH.$ We may consider various symmetry types for $U.$ For example, the operator $U$ is said to have \textbi{chiral symmetry}, if the following equality holds true;
\begin{equation}
\label{equation: chiral symmetry}
U^* = \varGamma U \varGamma,
\end{equation}
where $\varGamma$ is a fixed unitary self-adjoint operator on $\cH.$ In the presence of symmetries of this kind, index theory for \textit{unitary} quantum walks on the one-dimensional integer lattice $\Z$ has been a particularly active theme of recent mathematical studies of quantum walks \cite{Cedzich-Grunbaum-Stahl-Velazquez-Werner-Werner-2016,Cedzich-Geib-Grunbaum-Stahl-Velazquez-Werner-Werner-2018,Cedzich-Geib-Stahl-Velazquez-Werner-Werner-2018}. In particular, the chiral symmetry condition \eqref{equation: chiral symmetry} alone has attracted tremendous attention \cite{Suzuki-2019,Suzuki-Tanaka-2019,Matsuzawa-2020,Tanaka-2020,Cedzich-Geib-Werner-Werner-2021}, and it is also the main subject of the present article. 

\textbi{Suzuki's split-step quantum walk} \cite{Fuda-Funakawa-Suzuki-2017,Fuda-Funakawa-Suzuki-2018,Fuda-Funakawa-Suzuki-2019,Tanaka-2020,Narimatsu-Ohno-Wada-2021, Fuda-Narimatsu-Saito-Suzuki-2022} can be viewed as a standard example of a one-dimensional unitary quantum walk satisfying \eqref{equation: chiral symmetry}. The time-evolution operator of this model is given by the following $2 \times 2$ block-operator matrix on 
$\mathcal{H}=\ell^2(\mathbb{Z};\mathbb{C}^2)\simeq \ell^2(\mathbb{Z};\mathbb{C})\oplus\ell^2(\mathbb{Z};\mathbb{C}),$
where $\ell^2(\Z;\mathbb{C})$ is the Hilbert space of square-summable $\C$-valued sequences indexed by $\Z:$

\begin{align}
\label{equation: definition of Suzuki evolution operator}
U_{\textnormal{suz}} 
:= 
\begin{pmatrix}
1 & 0 \\
0  &  L^*
\end{pmatrix}
\begin{pmatrix}
p & \sqrt{1 - p^2} \\
\sqrt{1 - p^2} & -p
\end{pmatrix}
\begin{pmatrix}
1 & 0 \\
0  &  L
\end{pmatrix}
\begin{pmatrix}
a & \sqrt{1 - a^2} \\
\sqrt{1 - a^2} & -a
\end{pmatrix},
\end{align}
where $L$ is the bilateral left-shift operator on $\ell^2(\Z) := \ell^2(\Z;\mathbb{C})$ defined by $L \Psi = \Psi(\cdot + 1)$ for each $\Psi \in \ell^2(\Z),$ and where $p = (p(x))_{x \in \Z}$ and $a = (a(x))_{x \in \Z}$ are real-valued sequences assuming values in the closed interval $[-1,1].$ Note that any bounded sequence indexed by $\Z$ is identified with the corresponding multiplication operator on $\ell^2(\Z)$ throughout the present article. The unitary operator $\Usuz$ can then be decomposed into the product $\Usuz = S_\suz C_\suz,$ where the \textbi{shift operator} $S_\suz$ and \textbi{coin operator} $C_\suz$ are given respectively by
\[
S_\suz
:= 
\begin{pmatrix}
1 & 0 \\
0  &  L^*
\end{pmatrix}
\begin{pmatrix}
p & \sqrt{1 - p^2} \\
\sqrt{1 - p^2} & -p
\end{pmatrix}
\begin{pmatrix}
1 & 0 \\
0  &  L
\end{pmatrix}, \quad 
C_\suz 
:= 
\begin{pmatrix}
a & \sqrt{1 - a^2} \\
\sqrt{1 - a^2} & -a
\end{pmatrix}.
\]
If we set $(\varGamma, U) := (S_\suz ,U_{\textnormal{suz}})$ or $(\varGamma, U) := (C_\suz ,U_{\textnormal{suz}}),$ then it is easy to verify that the chiral symmetry condition \eqref{equation: chiral symmetry} holds true, since $S_\suz, C_\suz$ are unitary self-adjoint. It is well-known that in either case we can assign to the given pair $(\varGamma, U)$ a certain well-defined Fredholm index, say $\ind(\varGamma, U)$ (see \textsection \ref{section: bulk-edge} for precise definition). Moreover, the following estimate holds true;
\begin{equation}
\label{equation1: topological protection of bounded states}
|\ind(\varGamma,U)| \leq \dim \ker(U - 1) + \dim \ker(U + 1),
\end{equation}
provided that the essential spectrum of $U,$ denoted by $\ess(U),$ contains neither $-1$ nor $+1.$ As can be seen from \eqref{equation1: topological protection of bounded states}, if $\ind(\varGamma,U)$ is non-zero, then at least one of $\ker(U - 1),\ker(U + 1)$ contains a non-trivial eigenstate known as a \textit{topologically protected bound sate}. To put \eqref{equation1: topological protection of bounded states} into context, we may assume the existence of the following limit for each $\xi = p,a$ and each $\star = -\infty, +\infty:$
\begin{align}
\label{equation: anisotropic assumption}
\xi(\star) := \lim_{x \to \star} \xi(x).
\end{align}  
It is shown in \cite[Theorem B(i)]{Tanaka-2020} that under the assumption \eqref{equation: anisotropic assumption}, we have $-1,+1 \notin \ess(U_\suz)$ if and only if $|p(\star)| \neq |a(\star)|$ for each $\star = -\infty, +\infty.$ Moreover, if we set $(\varGamma, U) := (S_\suz ,U_{\textnormal{suz}}),$ then
{\footnotesize
\begin{equation}
\label{equation: witten index for ssqw}
\ind (\varGamma, U) =
\begin{cases}
0, &  |p(-\infty)| < |a(-\infty)| \mbox{ and }  |p(+\infty)| < |a(+\infty)|, \\
+\sign p(+\infty), &  |p(-\infty)| < |a(-\infty)| \mbox{ and }  |p(+\infty)| > |a(+\infty)|, \\
-  \sign p(-\infty), &  |p(-\infty)| > |a(-\infty)| \mbox{ and } |p(+\infty)| < |a(+\infty)|, \\
+\sign p(+\infty) - \sign p(-\infty), &  |p(-\infty)| >  |a(-\infty)| \mbox{ and }  |p(+\infty)| > |a(+\infty)|,
\end{cases}
\end{equation}
}
where $\sign : \R \to \{-1, 0, 1\}$ denotes the sign function.  Note that \eqref{equation: witten index for ssqw} is robust in the sense that it depends only on the asymptotic values \eqref{equation: anisotropic assumption}. An analogous index formula for the case $(\varGamma, U) := (C_\suz ,U_{\textnormal{suz}})$ can be found in \cite[Theorem B(i)]{Tanaka-2020}.

The index formula \eqref{equation: witten index for ssqw} is revisited in \cite{Asahara-Funakawa-Seki-Tanaka-2020} with $U_\suz$ replaced by a certain non-unitary version, say, $\tilde{U}_\suz$ (see \textsection \ref{section: preliminaries} for details). The shift operator of $\tilde{U}_\suz$ remains unchanged, but the new coin operator is non-unitary as it contains an additional $\R$-valued sequence $\gamma = (\gamma(x))_{x \in \Z}.$ From a physical point of view, the new parameter $\gamma$ represents the gain-loss effects of photons in an optical network experiment setup for the non-unitary model introduced by Mochizuki-Kim-Obuse \cite{Mochizuki-Kim-Obuse-2016} (see \cite[Theorem B]{Asahara-Funakawa-Seki-Tanaka-2020} for details). If $\gamma$ is identically zero, then such effects are deemed to be negligible, and we recover the unitary case $\tilde{U}_\suz = U_\suz.$ Otherwise, $\tilde{U}_\suz$ is non-unitary in general. Under the assumption that limits of the form \eqref{equation: anisotropic assumption} exist for each $\xi = p,a, \gamma$ and each $\star = -\infty, +\infty,$ an analogous index formula for $\tilde{U}_\suz$ can be proved (see \cite[Theorem C(i)]{Asahara-Funakawa-Seki-Tanaka-2020} for details).


Note, however, that it is not entirely obvious if this index formula plays any physically important role. In fact, it remains unclear whether or not \eqref{equation1: topological protection of bounded states} still holds true for \textit{non-unitary} $U.$ Note that the spectrum of a non-unitary operator can be any subset of the complex plane, and so the definition of topologically protected bound states is ambiguous in general. The purpose of the present article is to show that the use of \textit{transfer matrices} naturally allows us to obtain an explicit formula for such a bound state associated with $\tilde{U}_\suz.$ The utility of transfer matrices is known to be particularly useful in characterising eigenvalues and eigenstates, but they can also be used to clarify various properties of quantum walks, such as stationary measures, dispersive estimate, or spectral types of time-evolution
\cite{Cedzich-Fillman-Ong-2021, Maeda-Sasaki-Segawa-Suzuki-Suzuki-2022, Komatsu-Konno-2022, Kawai-Komatsu-Konno-2022}.

The rest of the present article is organised as follows. In \textsection \ref{section: preliminaries} we give the precise definition of the non-unitary split-step quantum walk $\tilde{U}_\suz,$ and transform the associated eigenvalue equation into a first-order difference equation by making use of transfer matrices. \textsection \ref{section: main result} is devoted to proving the main theorem of the present article (Theorem \ref{main_theorem}). The paper concludes with some discussions and remarks in \textsection \ref{section: discussion}.


On a final note, the scope of the present article is beyond that of \cite{Asahara-Funakawa-Seki-Tanaka-2020} which is in essence a study of topological invariants on the integer lattice $\Z,$ namely, mathematical quantities which depend only on asymptotic values of the form \eqref{equation: anisotropic assumption}. On the other hand, eigenstates of $\tilde{U}_\suz$ depend also on local information of $\xi = p, a, \gamma,$ and this is precisely why we need to take a completely different approach in this paper.


\section{Preliminaries}
\label{section: preliminaries}
By an operator we shall always mean an everywhere-defined linear bounded operator on a Hilbert space. The underlying Hilbert space of the present article is $\mathcal{H}=\ell^2(\mathbb{Z};\mathbb{C}^2)\simeq \ell^2(\mathbb{Z})\oplus\ell^2(\mathbb{Z}).$ The non-unitary version of Suzuki' split-step quantum walk \eqref{equation: definition of Suzuki evolution operator} we shall consider in this paper is an operator $U$ on $\cH$ of the form $U = SC,$ where the \textbi{shift operator} $S$ and \textbi{coin operator} $C$ are given respectively by the following formulas:
\begin{align*}
    &S := \begin{pmatrix}
    p(\cdot) & q(\cdot)L \\ L^*q^*(\cdot) & -p(\cdot -1)
    \end{pmatrix}
    , &
    &C := \begin{pmatrix}
    e^{-2\gamma(\cdot + 1)}a(\cdot) & e^{\gamma(\cdot) - \gamma(\cdot + 1)}b^*(\cdot)
    \\
    e^{\gamma(\cdot) - \gamma(\cdot + 1)}b(\cdot) & -e^{2\gamma}a(\cdot)
    \end{pmatrix},
\end{align*}
where we assume that three real-valued sequences $\gamma = (\gamma(x))_{x \in \Z}, p = (p(x))_{x \in \Z}, a = (a(x))_{x \in \Z},$ and two complex-valued sequences $q = (q(x))_{x \in \Z}, b = (b(x))_{x \in \Z}$ satisfy the following two conditions for each $x \in \Z:$
\begin{align*}
&p(x), a(x) \in (-1,1), \\
& p^2(x)+|q(x)|^2=a^2(x)+|b(x)|^2 = 1.
\end{align*}
We also assume that the coin operator $C$ has a \textit{two-phase} in the sense that for each $x \in \Z$ the two parameters $a(x), b(x)$ are of the following forms:
\begin{align*}
    &a(x) = \begin{cases}
    a_p, \quad & x\geq 0,
    \\
    a_m, \quad & x<0,
    \end{cases}
    &
    &b(x) = \begin{cases}
    b_p, \quad & x\geq 0,
    \\
    b_m, \quad & x<0,
    \end{cases}
\end{align*}
where $a_p, a_m \in (-1,1).$ For simplicity, we assume that all parameters of the shift operator $S$ are uniform. That is, for each $\xi = \gamma, p, q$ and each $x \in \Z,$ we set $\xi(x) = \xi.$

We consider en eigenvalue equation of the form $U \Psi = \lambda \Psi$ in what follows, where we may assume $\lambda \neq 0$ without loss of generality. Indeed, the set of eigenvalues of $U,$ denoted by $\sigmap(U),$ does not contain $0,$ since $U = SC$ is invertible. 



Any solution $\Psi \in \ker(U-\lambda)$ admits the following representation as in \cite{Kiumi-Saito-2021}:
\begin{align}
    \Psi(x) = 
    \begin{cases}
    \lr{T_p}^x T_{-1}\varphi,\quad & x\geq 0,
    \\
    \varphi,\quad & x=-1,
    \\
    \lr{T_m}^{x+1}\varphi,\quad & x\leq -2,
    \end{cases}
    \label{0201A}
\end{align}
for some $\varphi\in\mathbb{C}^2.$ Of course, $\varphi$ becomes the zero vector, if $\lambda \notin \sigmap(U).$ Here, the $2 \times 2$ matrices $T_p, T_m, T_{-1}$ defined below are called \textit{transfer matrices}: 
\begin{align}
T_{\ast} &=
\frac{1}{b_{\ast}q \lambda}
\begin{pmatrix}
\lambda^2 -2a_{\ast} p\cosh( 2\gamma)\lambda  +a_{\ast}^{2}
& -\ol{b_{\ast}}\lr{ p\lambda - a_{\ast} e^{2\gamma}}
\\
-b_{\ast}\lr{ p\lambda - a_{\ast} e^{-2\gamma}}
& |b_{\ast}|^{2}
\end{pmatrix},
\quad \ast\in \{p, m\}
\\[+8pt]
T_{-1} &=
\frac{1}{b_p q \lambda}
\begin{pmatrix}
\lambda^2 - \lr{a_p e^{2\gamma} + a_m e^{-2\gamma}}p\lambda + a_m a_p
&
-\ol{b_m}\lr{p - a_p e^{2\gamma}}
\\
-b_p\lr{p - a_m e^{-2\gamma}}
&
\ol{b_m}b_p
\end{pmatrix}.
\end{align}
Note that the existence of a non-trivial vector $\varphi\in\mathbb{C}^2$ in \eqref{0201A} is a necessary and sufficient condition for $\lambda\in\sigmap(U).$ To ensure the existence of such a vector $\varphi,$ we will impose some restrictions on the eigenvalue $\lambda.$

Firstly, we can easily check $\det|T_\ast| = 1$ for each $\ast\in\{p,m\},$ and so the two eigenvalues of the transfer matrix $T_\ast,$ denoted by $\zeta_\ast^>$ and $\zeta_\ast^<,$ must satisfy $|\zeta_\ast^>| \neq |\zeta_\ast^<|.$ If not, the sequence $\Psi$ in \eqref{0201A} fails to be square-summable unless $\varphi=0.$  By a direct calculation, the eigenvalues of each $T_\ast$ 
are given by
\begin{align*}
    \zeta_\ast^> &= \frac{\lr{\lambda + \lambda^{-1}-2p a_\ast \cosh(2\gamma)} + s_\ast
    \sqrt{\lr{\lambda + \lambda^{-1}-2p a_\ast \cosh(2\gamma)}^2 - 4|b_\ast q|^2}
    }{2b_\ast q}
    ,
    \\
    \zeta_\ast^< &= \frac{\lr{\lambda + \lambda^{-1}-2p a_\ast \cosh(2\gamma)} - s_\ast
    \sqrt{
    \lr{\lambda + \lambda^{-1}-2p a_\ast \cosh(2\gamma)}^2 - 4|b_\ast q|^2}
    }{2b_\ast q}
    ,
\end{align*}
where $s_\ast$ is a sign function such that $|\zeta_\ast^>| \geq |\zeta_\ast^<|$ holds. We note that $\lambda$ is always real according to the later discussion in this paper (see Lemma \ref{lem:lambda} for details), and so this sign function is given explicitly by $s_\ast = \sgn\lr{\Re\lr{\lambda + \lambda^{-1}}-2p a_\ast \cosh(2\gamma)},$ where $\Re(x)$ denotes the real part of a complex number $x.$ It follows that the following set contains the point spectrum $\sigmap(U);$
\[
    \Lambda := 
    \Lr{\lambda\in\mathbb{C} \mid |\zeta_\ast^>| \neq |\zeta_\ast^<|,\ \ast\in\{p, m\}}.
\]
\begin{lemma}
\label{lem:Lambdacheck}
For each $\lambda \in \mathbb{C}\setminus\{0\}$ and each $\ast\in\{p, m\},$ we have $\lambda \not\in \Lambda $ if and only if
\[
\lambda + \lambda^{-1}\in\mathbb{R}\quad \text{and}\quad 
\lr{\lambda +\lambda ^{-1} -2p a_\ast\cosh( 2\gamma)}^{2} -4| b_{\ast}q| ^{2} \leq 0.
\]
\end{lemma}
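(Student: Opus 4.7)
The strategy is to reduce the equality $|\zeta_\ast^>|=|\zeta_\ast^<|$ to both eigenvalues lying on the unit circle, and then to translate this circular condition into the claimed algebraic one via the Vieta relations for $T_\ast$. Direct expansion of the $2\times 2$ determinant, simplified with $p^2+|q|^2=1$ and $a_\ast^2+|b_\ast|^2=1$, gives $\det T_\ast=|b_\ast q|^2/(b_\ast q)^2$, whose modulus is one. Since $\zeta_\ast^>\zeta_\ast^<=\det T_\ast$, this yields $|\zeta_\ast^>|\cdot|\zeta_\ast^<|=1$, so $|\zeta_\ast^>|=|\zeta_\ast^<|$ if and only if both eigenvalues have modulus one.

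Write $A:=\lambda+\lambda^{-1}-2pa_\ast\cosh(2\gamma)$ and $B:=2|b_\ast q|$, so that the explicit formulas read $\zeta_\ast^{>,<}=(A\pm s_\ast\sqrt{A^2-B^2})/(2b_\ast q)$. The \emph{sufficient} direction is then immediate: under $\lambda+\lambda^{-1}\in\mathbb{R}$ the number $A$ is real, and $A^2-B^2\leq 0$ makes $\sqrt{A^2-B^2}$ purely imaginary (or zero). Hence the numerator has modulus $\sqrt{A^2+(B^2-A^2)}=B$, matching the modulus of the denominator, so both eigenvalues lie on the unit circle.

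The \emph{necessary} direction is where I expect the main obstacle, since one must extract reality of $\lambda+\lambda^{-1}$ from an a priori complex setting. I would bypass any direct algebraic manipulation of the eigenvalue formulas by working in polar form. Assuming $|\zeta_\ast^>|=|\zeta_\ast^<|=1$, write $\zeta_\ast^>=e^{i\alpha}$, $\zeta_\ast^<=e^{i\beta}$ and $b_\ast q=r e^{i\theta}$. The product relation $\zeta_\ast^>\zeta_\ast^<=e^{-2i\theta}$ forces $\alpha+\beta\equiv -2\theta\pmod{2\pi}$, so that
\[
\zeta_\ast^>+\zeta_\ast^<=e^{i\alpha}+e^{i\beta}=\pm 2 e^{-i\theta}\cos\frac{\alpha-\beta}{2}.
\]
Comparing this with the Vieta sum $\zeta_\ast^>+\zeta_\ast^<=A/(b_\ast q)=(A/r)e^{-i\theta}$ forces $A=\pm 2r\cos\frac{\alpha-\beta}{2}$, which is both real and bounded in absolute value by $2r=B$. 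Since $2pa_\ast\cosh(2\gamma)$ is real, this yields simultaneously $\lambda+\lambda^{-1}\in\mathbb{R}$ and $A^2\leq B^2$, completing the proof. The polar-form step extracts reality together with the quadratic bound in one stroke, which is the conceptual crux of the argument.
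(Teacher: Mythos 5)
Your proof is correct, and while it opens exactly as the paper does---computing $\det T_\ast=|b_\ast q|^2/(b_\ast q)^2$, so that $|\zeta_\ast^>|\,|\zeta_\ast^<|=1$ and equality of the two moduli forces both eigenvalues onto the unit circle---the decisive step is handled by a genuinely different method. The paper stays with the explicit root formulas and reduces the requirement $\left|X\pm\sqrt{X^{2}-4|b_\ast q|^{2}}\right|^{2}=4|b_\ast q|^{2}$ (for both signs) to three algebraic conditions by repeated squaring, identifying one of them with $X\in\mathbb{R}$ and checking that the cross-term condition $\Re\bigl(\overline{X}\sqrt{X^{2}-4|b_\ast q|^{2}}\bigr)=0$ is implied by the other two. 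You instead prove necessity via Vieta's relations in polar form: the product $\zeta_\ast^>\zeta_\ast^<=e^{-2i\theta}$ pins down $\alpha+\beta$, so the sum $e^{i\alpha}+e^{i\beta}=\pm 2e^{-i\theta}\cos\frac{\alpha-\beta}{2}$ must match the trace $A/(b_\ast q)=(A/r)e^{-i\theta}$, which yields $A=\pm 2r\cos\frac{\alpha-\beta}{2}$ and hence reality of $\lambda+\lambda^{-1}$ together with the bound $A^{2}\leq 4|b_\ast q|^{2}$ in a single stroke; sufficiency is then a one-line modulus computation. Your route makes the geometry transparent and avoids the paper's chain of squarings, whose mutual equivalence requires some care to verify; the paper's route is more mechanical and never parametrizes the eigenvalues. (Both arguments implicitly use $b_\ast q\neq 0$, guaranteed by $p,a_\ast\in(-1,1)$, and both in effect characterize $|\zeta_\ast^>|=|\zeta_\ast^<|$ for a fixed $\ast$, which is how the slightly loosely worded statement of the lemma is meant to be read.)
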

\begin{proof}
Let $X=\lambda +\lambda^{-1} -2p a_\ast\cosh( 2\gamma).$
If $|\zeta_\ast^>| = |\zeta_\ast^<|,$ then these equal to $1$ since $|\det(T_\ast)|=1.$
Thus, holding this equation if and only if
\[
    \left|X\pm \sqrt{X^{2} -4|b_\ast q| ^{2}}\right| ^{2} =4|b_\ast q| ^{2}
\]
for both signs $+$ and $-.$
Moreover, these conditions are equivalent to holding following three conditions (i) to (iii) by taking square for both side after a deformation.
\begin{enumerate}
    \item[(i)] $\left|X^2 - 4|b_\ast q|^2\right|^2 = \left||X|^2 - 4|b_\ast q|^2\right|^2,$
    \item[(ii)] $\Re\lr{\ol{X}\sqrt{X^2 - 4 |b_\ast q|^2}}=0,$
    \item[(iii)] $|X|^2 -4|b_\ast q|^2 \leq 0.$
\end{enumerate}
Here, the above condition (i) is equivalent to $X\in \mathbb{R}.$
If (i) and (iii) hold, then (ii) holds because $X^2 = |X|^2.$
Thus, we get desired conclusion.
\end{proof}

Noting that $\varphi \in \ker\lr{T_m - \zeta_m^>} \cap \ker\lr{\lr{T_p -\zeta_p^<}T_{-1}}$ is a necessary condition for $\Psi$ in \eqref{0201A} to be square-summable, we conclude the current section with the following lemma;
\begin{lemma}
\label{lem:condition_ev}
We have $\lambda\in\sigmap(U)$ if and only if the following two conditions hold simultaneously:
\begin{enumerate}
    \item[{\rm (i)}] $\lambda \in \Lambda,$
    \item[{\rm (ii)}] $\dim\lr{\ker\lr{T_m - \zeta_m^>} \cap \ker\lr{\lr{T_p -\zeta_p^<}T_{-1}}}\neq 0.$
\end{enumerate}
\end{lemma}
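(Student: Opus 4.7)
The plan is to read the two conditions (i) and (ii) as precisely the requirements for the ansatz \eqref{0201A} to yield a non-trivial square-summable solution. Since the discussion preceding the lemma establishes that every $\Psi \in \ker(U - \lambda)$ is of the form \eqref{0201A} for some $\varphi \in \mathbb{C}^2$, and conversely that any $\varphi$ gives a formal solution of the eigenvalue recurrence, it remains only to characterise when a non-trivial $\varphi$ makes $\Psi$ lie in $\ell^2(\Z;\C^2)$.

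For the necessity of (i): suppose $\lambda \in \sigmap(U)$ is realised by a non-zero $\varphi$, but $\lambda \notin \Lambda$. Then $|\zeta_\ast^>| = |\zeta_\ast^<|$ for some $\ast \in \{p,m\}$, and because $|\det T_\ast| = 1$ this common modulus equals $1$. Hence the relevant semi-infinite tail of $\Psi$, being generated by iterating $T_\ast$ (or $T_\ast^{-1}$) on a non-zero vector, cannot decay in modulus and fails to be square-summable. This is precisely the observation already recorded in the paragraph following \eqref{0201A}, so the necessity of (i) is immediate.

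Assuming (i), the eigenvalues of $T_\ast$ are distinct and, since their product has modulus $1$, we have $|\zeta_\ast^<| < 1 < |\zeta_\ast^>|$. The analysis of the two tails is then straightforward. On the right tail $x \geq 0$, we have $\Psi(x) = T_p^{\,x} T_{-1}\varphi$; diagonalising $T_p$ and expanding $T_{-1}\varphi$ in its eigenbasis shows that square-summability at $+\infty$ is equivalent to the vanishing of the $\zeta_p^>$-component, i.e.\ $T_{-1}\varphi \in \ker(T_p - \zeta_p^<)$, which is the same as $\varphi \in \ker((T_p - \zeta_p^<)T_{-1})$. On the left tail $x \leq -2$, we have $\Psi(x) = T_m^{\,x+1}\varphi = (T_m^{-1})^{-x-1}\varphi$ and $T_m^{-1}$ has eigenvalues $1/\zeta_m^>$ and $1/\zeta_m^<$ with $|1/\zeta_m^>| < 1 < |1/\zeta_m^<|$; square-summability at $-\infty$ is thus equivalent to $\varphi$ lying in the $(1/\zeta_m^>)$-eigenspace of $T_m^{-1}$, equivalently $\varphi \in \ker(T_m - \zeta_m^>)$. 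Intersecting the two conditions yields (ii).

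Conversely, once (i) holds and a non-zero $\varphi$ satisfies (ii), the same diagonalisations show that both tails of $\Psi$ decay geometrically, so $\Psi \in \ell^2(\Z;\C^2) \setminus \{0\}$ and $\lambda \in \sigmap(U)$. The main obstacle I anticipate is bookkeeping: one must keep the direction of propagation (forward $T_p$ for $x \to +\infty$ versus backward $T_m^{-1}$ for $x \to -\infty$) consistent with the labelling convention $|\zeta_\ast^>| \geq |\zeta_\ast^<|$, which is why $\zeta_m^>$ (not $\zeta_m^<$) appears in (ii) despite the fact that on both sides we are selecting the contractive direction. Aside from this, the argument is a routine geometric-series decay estimate once the eigendecomposition of $T_p$ and $T_m$ is invoked.
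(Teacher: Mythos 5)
Your proposal is correct and follows essentially the same route as the paper, which simply asserts the lemma after noting that $\varphi \in \ker(T_m - \zeta_m^>) \cap \ker((T_p -\zeta_p^<)T_{-1})$ is necessary for square-summability of \eqref{0201A}; you merely spell out the eigenbasis/geometric-decay argument (including the correct orientation issue of selecting $\zeta_m^>$ on the left tail because one iterates $T_m^{-1}$ there) that the paper leaves implicit. No gaps.
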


\section{Spectral analysis of the non-unitary split-step quantum walk}
\label{section: main result}

In this section, we focus on a non-trivial characterisation of the condition (ii) in Lemma \ref{lem:condition_ev} to find an explicit formula for $\lambda.$ To do so, we set the following notation for each $ * \in\{m,p\}:$
\begin{align*}
 K_{*} = p\lambda -a_{*}e^{-2\gamma } 
 ,\qquad
 K_{*}^{\prime } = a_{*} e^{2\gamma } -p\lambda ^{-1},
 \qquad
  J_{*} =\lambda -\lambda ^{-1} +2pa_{*}\sinh( 2\gamma ).
\end{align*}
Here, we note that
\begin{align}
\label{eq:J^2-4KK'}
\left( \lambda +\lambda ^{-1} -2p a_{*}\cosh( 2\gamma)\right)^{2} -4| b_{*}q| ^{2} =J_{*}^{2} -4K_{*} K_{*}^{\prime }.
\end{align}

For $\lambda\in\mathbb{C}\setminus\{0\},$ let the two eigenvectors of $T_\ast$ associated with $\zeta _{*}^{>},\zeta _{*}^{<}$ be $v_{*}^{ >},v_{*}^{<}$ respectively. We may assume without loss of generality that $J_\ast \neq 0.$ Otherwise, $\zeta_\ast^> = \zeta_\ast^<$ implies $\lambda\not\in \Lambda,$ and so $\lambda \notin \sigmap(U)$ according to Lemma \ref{lem:condition_ev}. In the case of $K_{*} \neq 0,$ the two vectors $v_{*}^{ >}$ and $v_{*}^{<}$ can be written as
\begin{align}
\label{eq:v>}
v_{*}^{ >} =\begin{pmatrix}
-\dfrac{\lambda }{2b_{*}}\left( J_{*} +s_{*}\sqrt{J_{*}^{2} -4K_{*} K_{*}^{\prime }} -2a_{*} e^{2\gamma } \lambda ^{-1} K_{*}\right)\\
K_{*}
\end{pmatrix}
\end{align}
and
\begin{align}
\label{eq:v<}
v_{*}^{< } =\begin{pmatrix}
-\dfrac{\lambda }{2b_{*}}\left( J_{*} -s_{*}\sqrt{J_{*}^{2} -4K_{*} K_{*}^{\prime }} -2a_{*} e^{2\gamma } \lambda ^{-1} K_{*}\right)\\
K_{*}
\end{pmatrix}.
\end{align}
In the case of $ K_{*} =0,$ we have $J_\ast\in\mathbb{R}$ and 
\[
T_{*} =\frac{1}{b_{*} q\lambda }\begin{pmatrix}
\lambda J_{*} +|b_{*} |^{2} & 2a_{*}\overline{b_{*}} \sinh( 2\gamma )\\
0 & | b_{*}| ^{2}
\end{pmatrix}.
\]
Moreover, the eigenvalues of the transfer matrix $T_{*}$ become
\[
\zeta _{*}^{ >} =\frac{( 1+s_{*}\sgn( J_{*})) \lambda J_{*} +2|b_{*} |^{2}}{2b_{*} q\lambda } ,\quad \zeta _{*}^{< } =\frac{( 1-s_{*}\sgn( J_{*})) \lambda J_{*} +2|b_{*} |^{2}}{2b_{*} q\lambda }.
\]
Note that the $(2,2)$-element of the matrix $T_\ast - \zeta_\ast^>$ (resp. $T_\ast - \zeta_\ast^<$) equals to zero if and only if $s_\ast \sgn(J_\ast)<0$ (resp. $s_\ast \sgn(J_\ast)>0$). Thus, $v_{*}^{>}$ and $v_{*}^{<}$ can be written as below:
\begin{align}
\label{eq:eigenvector}
v_{*}^{ >} =\begin{cases}
\begin{pmatrix}
1\\
0
\end{pmatrix} , & s_{*}\sgn( J_{*})  >0,\\[20pt]
\begin{pmatrix}
2a_{*}\overline{b_{*}}\sinh( 2\gamma )\\
-\lambda J_{*}
\end{pmatrix} , & s_{*}\sgn( J_{*}) < 0,
\end{cases}\ v_{*}^{< } =\begin{cases}
\begin{pmatrix}
2a_{*}\overline{b_{*}}\sinh( 2\gamma )\\
-\lambda J_{*}
\end{pmatrix} , & s_{*}\sgn( J_{*})  >0,\\[20pt]
\begin{pmatrix}
1\\
0
\end{pmatrix} , & s_{*}\sgn( J_{*}) < 0.
\end{cases}
\end{align}
\begin{lemma}
\label{lem:cond}
For $\lambda\in\mathbb{C}\setminus\{0\},$ the condition (ii) in Lemma \ref{lem:condition_ev}, i.e., \\$ \dim\lr{\ker\lr{T_m - \zeta_m^>} \cap \ker\lr{\lr{T_p -\zeta_p^<}T_{-1}}}\neq 0,$ is satisfied if and only if the following three conditions hold simultaneously:
\begin{enumerate}
    \item [{\rm (i)}]$( K_{m} J_{p} -K_{p} J_{m})\left( K_{m}^{\prime } J_{p} -K_{p}^{\prime } J_{m}\right) +\left( K_{m} K_{p}^{\prime } -K_{m}^{\prime } K_{p}\right)^{2} =0,$
    \item [{\rm (ii)}]$\sgn\left( s_{p} K_{m}\sqrt{J_{p}^{2} -4K_{p} K_{p}^{\prime }} +s_{m} K_{p}\sqrt{J_{m}^{2} -4K_{m} K_{m}^{\prime }}\right) =\sgn( K_{m} J_{p} -K_{p} J_{m}),$
    \item [{\rm (iii)}]$\sgn\left( 2K_{m} K_{p}^{\prime } +2K_{p} K_{m}^{\prime } -J_{p} J_{m}\right) =\sgn\left( s_{p} s_{m}\sqrt{J_{p}^{2} -4K_{p} K_{p}^{\prime }}\sqrt{J_{m}^{2} -4K_{m} K_{m}^{\prime }} \ \right).$
\end{enumerate}


\end{lemma}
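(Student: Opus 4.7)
The plan is to reduce the intersection condition in (ii) of Lemma \ref{lem:condition_ev} to a single determinant equation, and then to disentangle its square-root structure to extract the three conditions (i)--(iii).

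Since $\lambda\in\Lambda$ the eigenvalues $\zeta_m^{>},\zeta_m^{<}$ of $T_m$ are distinct (and likewise $\zeta_p^{>},\zeta_p^{<}$), so $\ker(T_m-\zeta_m^{>})$ and $\ker(T_p-\zeta_p^{<})$ are one-dimensional and spanned by $v_m^{>}$ and $v_p^{<}$ respectively. Any $\varphi\in\ker(T_m-\zeta_m^{>})$ has the form $\varphi=c\,v_m^{>}$ for some $c\in\mathbb{C},$ and such a $\varphi$ belongs also to $\ker((T_p-\zeta_p^{<})T_{-1})$ precisely when $T_{-1}v_m^{>}\in\mathbb{C}\,v_p^{<}.$ Hence the intersection contains a nonzero vector if and only if $T_{-1}v_m^{>}$ and $v_p^{<}$ are linearly dependent in $\mathbb{C}^{2},$ equivalently
\[
\det\bigl[\,T_{-1}v_m^{>}\;\big|\;v_p^{<}\,\bigr]=0.
\]

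For the generic case $K_m,K_p\neq 0$ I would use the decompositions $v_{\ast}^{>}=\mathbf{u}_{\ast}+s_{\ast}R_{\ast}\,\mathbf{q}_{\ast}$ and $v_{\ast}^{<}=\mathbf{u}_{\ast}-s_{\ast}R_{\ast}\,\mathbf{q}_{\ast}$ read off from \eqref{eq:v>}--\eqref{eq:v<}, where $R_{\ast}:=\sqrt{J_{\ast}^{2}-4K_{\ast}K_{\ast}^{\prime}}$ and $\mathbf{u}_{\ast},\mathbf{q}_{\ast}\in\mathbb{C}^{2}$ collect the terms independent of and linear in $s_{\ast}R_{\ast}$ respectively. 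Expanding the determinant above by multilinearity yields an equation
\[
d_{00}-s_{p}R_{p}\,d_{01}+s_{m}R_{m}\,d_{10}-s_{m}s_{p}R_{m}R_{p}\,d_{11}=0,
\]
where each $d_{ij}$ is a fixed $2\times 2$ determinant built from $T_{-1},\mathbf{u}_{\ast},\mathbf{q}_{\ast},$ and is therefore rational in the invariants $K_{\ast},K_{\ast}^{\prime},J_{\ast}.$ A direct computation, aided by the identity \eqref{eq:J^2-4KK'}, reorganises these four coefficients in terms of the compact combinations $K_mJ_p-K_pJ_m,$ $K_m^{\prime}J_p-K_p^{\prime}J_m,$ $K_mK_p^{\prime}-K_m^{\prime}K_p,$ and $2K_mK_p^{\prime}+2K_pK_m^{\prime}-J_pJ_m$ that appear in (i)--(iii).

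Isolating the square-root terms by writing
\[
d_{00}-s_{m}s_{p}R_{m}R_{p}\,d_{11}=s_{p}R_{p}\,d_{01}-s_{m}R_{m}\,d_{10}
\]
and squaring both sides eliminates the individual signs $s_{p}R_{p},s_{m}R_{m}$ and gives
\[
d_{00}^{2}+D_{m}D_{p}\,d_{11}^{2}-D_{p}\,d_{01}^{2}-D_{m}\,d_{10}^{2}=2s_{m}s_{p}R_{m}R_{p}\bigl(d_{00}d_{11}-d_{01}d_{10}\bigr),
\]
with $D_{\ast}:=R_{\ast}^{2}.$ A further squaring, or an equivalent algebraic simplification using the explicit $d_{ij},$ should collapse this identity to the polynomial equation in (i). The sign conditions (ii) and (iii) then re-encode the two sign choices that are discarded in passing to the squared equations: (ii) records the sign of the right-hand side $s_{p}R_{p}d_{01}-s_{m}R_{m}d_{10}$ of the rearrangement displayed above, which after simplification is expected to take the form $\sgn(s_{p}K_{m}R_{p}+s_{m}K_{p}R_{m})=\sgn(K_{m}J_{p}-K_{p}J_{m}),$ while (iii) records the sign of $s_{m}s_{p}R_{m}R_{p}(d_{00}d_{11}-d_{01}d_{10})$ surviving the first squaring, which should reduce to $\sgn(2K_mK_p^{\prime}+2K_pK_m^{\prime}-J_pJ_m)=\sgn(s_{m}s_{p}R_{m}R_{p}).$

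The main obstacle is the algebraic bookkeeping: the four $d_{ij}$ are lengthy rational expressions, and showing that their combinations collapse to the precise polynomial and sign data of (i)--(iii) requires a careful, if routine, computation. A secondary task is the degenerate case $K_{\ast}=0,$ in which \eqref{eq:eigenvector} replaces \eqref{eq:v>}--\eqref{eq:v<}; here one verifies separately that $K_{\ast}=0$ forces $J_{\ast}\in\R$ and $R_{\ast}=|J_{\ast}|,$ and that (i)--(iii) each reduce consistently to the appropriate identities in this limit.
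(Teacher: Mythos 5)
Your proposal follows essentially the same route as the paper: reduce condition (ii) of Lemma \ref{lem:condition_ev} to linear dependence of $T_{-1}v_m^{>}$ and $v_p^{<}$, obtain a single scalar equation involving the two square roots, and square twice, recording the discarded sign choices as conditions (ii) and (iii) and the final polynomial identity as (i), with the degenerate cases $K_{\ast}=0$ treated separately via \eqref{eq:eigenvector}. The only difference is organizational (you expand a $2\times 2$ determinant multilinearly in $s_{\ast}R_{\ast}$, whereas the paper reads the dependency off directly from the second components of the vectors), so the argument is correct and matches the paper's proof.
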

\begin{proof}
The statement will be proved by checking the condition of linear dependency between $T_{-1}v_m^>$ and $v_p^<$ in \eqref{eq:v>}, \eqref{eq:v<} and \eqref{eq:eigenvector}.
In the case $ K_{m} \neq 0\ $and$ \ K_{p} \neq 0,$
\[
T_{-1} v_{m}^{ >} =\zeta _{m}^{ >}\begin{pmatrix}
-\dfrac{\lambda }{2b_{p}}\left( J_{m} +s_{m}\sqrt{J_{m}^{2} -4K_{m} K_{m}^{\prime }} -2a_{p} e^{2\gamma } \lambda ^{-1} K_{m}\right)\\[+12pt]
K_{m}
\end{pmatrix}
\]
holds. Therefore, $ T_{-1} v_{m}^{ >}$ and $ v_{p}^{< }$ are linear dependent if and only if 
\[
s_{p} K_{m}\sqrt{J_{p}^{2} -4K_{p} K_{p}^{\prime }} +s_{m} K_{p}\sqrt{J_{m}^{2} -4K_{m} K_{m}^{\prime }} =K_{m} J_{p} -K_{p} J_{m}.
\]
By squaring both sides, we can check that this equation is equivalent to 
\begin{align}
    \label{eq:proof1}
    2K_{m} K_{p}^{\prime } +2K_{p} K_{m}^{\prime } -J_{p} J_{m} =s_{p} s_{m}\sqrt{J_{p}^{2} -4K_{p} K_{p}^{\prime }}\sqrt{J_{m}^{2} -4K_{m} K_{m}^{\prime }}
\end{align}
and (ii). 
Furthermore, by squaring both sides and removing the square root again, \eqref{eq:proof1} is converted to the equations (i) and (iii).

Next, we consider the case $ K_{m} =0$ and $ K_{p} \neq 0.$ According to \eqref{eq:eigenvector}, we can see that
\[
T_{-1} v_{m}^{ >} =\begin{cases}
\dfrac{\overline{q} \lambda }{b_{p}}\begin{pmatrix}
1\\
0
\end{pmatrix} , & s_{m}\sgn( J_{m})  >0,\\[20pt]
\dfrac{\overline{b_{m}}}{b_{p} q}\begin{pmatrix}
\lambda K_{m}^{\prime } -a_{p} e^{2\gamma } J_{m}\\
-b_{p} J_{m}
\end{pmatrix} , & s_{m}\sgn( J_{m}) < 0.
\end{cases}
\]
 If $ s_m \sgn(J_{m}) > 0,$ it can be easily checked that $T_{-1}v_{m}^{ >}$ and $ v_{p}^{< }$ are always not linear dependent since $ K_{p} \neq 0.$ Thus, in $ K_{m} =0$ and $ K_{p} \neq 0$ case, $ T_{-1} v_{m}^{ >}$ and $ v_{p}^{< }$ are linear dependent if and only if
\[
 \left( J_{p} -s_{p}\sqrt{J_{p}^{2} -4K_{p} K_{p}^{\prime }}\right) J_{m} =2K_{p} K_{m}^{\prime }
\]
and
\[
 s_{m}\sgn( J_{m}) < 0,
\]
which are equivalent to \eqref{eq:proof1} and (ii), respectively.
As mentioned above, \eqref{eq:proof1} is equivalent to (i) and (iii), so the proof of this case is completed.
Considering similarly, we can also prove the statement with respect to the case $ K_{m} \neq 0,\ K_{p} =0.$

Finally, we consider the case $K_{m} =K_{p} =0.$
Since $a_{m} =a_{p}$ holds, we can write $s=s_{m} =s_{p}$ and $J=J_{m} =J_{p} .$ From \eqref{eq:eigenvector}, $T_{-1} v_{m}^{ >}$ and $v_{p}^{< }$ become
\begin{align*}
    \lr{T_{-1}v_m^>,\ v_p^<}
    =
    \begin{cases}
    \lr{
    \dfrac{\overline{q} \lambda }{b_{p}}
    \begin{pmatrix}
    1\\
    0
    \end{pmatrix}
    ,\quad
    \begin{pmatrix}
    2a\overline{b_{p}}\sinh( 2\gamma )\\
    -\lambda J
    \end{pmatrix}
    }
    ,\quad &s \sgn(J)>0,
    \\[+20pt]
    \lr{
    \dfrac{\overline{b_{m}}}{b_{p} q}\begin{pmatrix}
    \lambda K_{m}^{\prime } -ae^{2\gamma } J\\
    -b_{p} J
    \end{pmatrix}
    ,
    \quad
    \begin{pmatrix}
1\\
0
\end{pmatrix}
}
,
\quad &s \sgn(J)<0.
    \end{cases}
\end{align*}
Since $\lambda, J \neq 0,$ these vectors are not linear dependent.
Moreover, the equation (iii) does not hold in this case because this equation becomes $-1 = s_p s_m,$ but $s_p = s_m$ gives $s_p s_m=1.$
From the above discussions, the proof is complete.
\end{proof}
By analysing the equations (i), (ii) and (iii) in Lemma \ref{lem:cond}, we can explicitly calculate the eigenvalues of $U.$
\begin{lemma}
\label{lem:lambda}
Let $p\cosh(2\gamma)/|q| \neq s_2 a_\ast/|b_\ast|$ for each $\ast\in\{m, p\},$ and let 
\[
\lambda(s_1, s_2):=s_1p \sinh(2\gamma)+s_2\sqrt{1+p^2\sinh^2(2\gamma)}, \quad (s_1, s_2) \in \{-1, +1\}^2.
\]
Then we have the following inclusions:
\[
\sigmap(U)\subset\{\lambda(s_1, s_2)\mid (s_1,s_2)\in\{-1, +1\}^2\}\subset\Lambda.
\]
Moreover, $\lambda=\lambda(s_1, s_2)$ is equivalent to the condition (i) in Lemma \ref{lem:cond}.
\end{lemma}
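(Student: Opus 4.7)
The plan is to separate the proof into two essentially independent algebraic tasks: first, to show that condition (i) of Lemma \ref{lem:cond} is equivalent to $\lambda\in\{\lambda(s_1,s_2)\mid(s_1,s_2)\in\{-1,+1\}^2\}$; and second, to show via Lemma \ref{lem:Lambdacheck} that each $\lambda(s_1,s_2)$ lies in $\Lambda$. The first task yields the ``moreover'' clause and, combined with Lemmas \ref{lem:condition_ev} and \ref{lem:cond}, also the inclusion $\sigmap(U)\subset\{\lambda(s_1,s_2)\}$, since any eigenvalue in particular satisfies condition (i) of Lemma \ref{lem:cond}. The second task delivers the inclusion $\{\lambda(s_1,s_2)\}\subset\Lambda$.

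For the first task, I would begin by expanding the three building blocks
\[
K_m J_p - K_p J_m, \qquad K_m' J_p - K_p' J_m, \qquad K_m K_p' - K_m' K_p,
\]
using the definitions at the start of Section \ref{section: main result}. A direct computation shows that each of them factors as $(a_p-a_m)$ times an expression affine in $\lambda$ and $\lambda^{-1}$, with coefficients built out of $p,q,\gamma$. Consequently, condition (i) of Lemma \ref{lem:cond} takes the form $(a_p-a_m)^2 F(\lambda)=0$ for a single Laurent polynomial $F(\lambda)$ whose coefficients depend only on $p,q,\gamma$. Grouping by powers of $\lambda$ and using the constraint $p^2+|q|^2=1$, the $\lambda^{\pm 2}$ coefficients both simplify to $-|q|^2$, and the constant term collapses via the identity $\cosh(4\gamma)-1=2\sinh^2(2\gamma)$. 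The equation $F(\lambda)=0$ then reduces to the remarkably clean form
\[
(\lambda-\lambda^{-1})^2 = 4p^2\sinh^2(2\gamma).
\]
Solving the two resulting quadratics $\lambda^2-2s_1 p\sinh(2\gamma)\lambda-1=0$ for $s_1\in\{-1,+1\}$, and taking the two branches $s_2\in\{-1,+1\}$ of the square root, produces exactly the four candidates $\lambda(s_1,s_2)$.

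For the second task, I would use the above factorisation $\lambda^2-2s_1 p\sinh(2\gamma)\lambda-1=0$ to compute directly
\[
\lambda(s_1,s_2)+\lambda(s_1,s_2)^{-1} = 2s_2\sqrt{1+p^2\sinh^2(2\gamma)},
\]
which is real, so that by Lemma \ref{lem:Lambdacheck} membership in $\Lambda$ reduces to verifying strict positivity of $(\lambda+\lambda^{-1}-2pa_\ast\cosh(2\gamma))^2-4|b_\ast q|^2$ for $\ast\in\{m,p\}$. Substituting the above, and using the identities $1+p^2\sinh^2(2\gamma)=p^2\cosh^2(2\gamma)+|q|^2$ and $|b_\ast|^2=1-a_\ast^2$, the quantity rearranges into a perfect square of the form $4\bigl(s_2 a_\ast\sqrt{p^2\cosh^2(2\gamma)+|q|^2}-p\cosh(2\gamma)\bigr)^2$, which vanishes precisely when $p\cosh(2\gamma)/|q|=s_2 a_\ast/|b_\ast|$. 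The hypothesis of the lemma thus delivers strict positivity, so $\lambda(s_1,s_2)\in\Lambda$, completing the second inclusion.

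The main obstacle is the algebraic collapse in the first task: a priori, condition (i) of Lemma \ref{lem:cond} is a polynomial of degree four in $\lambda$ and $\lambda^{-1}$ depending on six parameters, and there is no obvious reason for it to reduce to a function of the single quantity $\lambda-\lambda^{-1}$. The three observations that make the collapse work are (a) the systematic appearance of $(a_p-a_m)$ in each building block, reflecting the two-phase structure of $C$; (b) the cancellation $p^2 e^{4\gamma}-(p^2 e^{2\gamma}+|q|^2 e^{-2\gamma})e^{2\gamma}=-|q|^2$ and its $\lambda^{-2}$-companion, which relies essentially on $p^2+|q|^2=1$; and (c) the hyperbolic simplification of the constant term. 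A secondary subtlety in the second task is that the separation hypothesis must deliver strict inequality rather than only non-vanishing, and this is exactly what the perfect-square rearrangement secures.
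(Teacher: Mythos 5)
Your overall route is the same as the paper's: reduce condition (i) of Lemma \ref{lem:cond} to $(\lambda-\lambda^{-1})^2=4p^2\sinh^2(2\gamma)$ to obtain the four candidates $\lambda(s_1,s_2)$, and verify membership in $\Lambda$ via Lemma \ref{lem:Lambdacheck} by rewriting
\[
\lr{\lambda+\lambda^{-1}-2pa_\ast\cosh(2\gamma)}^2-4|b_\ast q|^2
=4\lr{p\cosh(2\gamma)-s_2a_\ast\sqrt{p^2\cosh^2(2\gamma)+|q|^2}}^2,
\]
which the separation hypothesis keeps strictly positive. These computations check out (the $\lambda^{\pm2}$ coefficients do collapse to $-|q|^2$, and the vanishing locus of the square is exactly $p\cosh(2\gamma)/|q|=s_2a_\ast/|b_\ast|$).

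There is, however, one genuine gap: the degenerate case $a_p=a_m$, which is not excluded by the two-phase assumption. Each of the three building blocks $K_mJ_p-K_pJ_m$, $K_m'J_p-K_p'J_m$, $K_mK_p'-K_m'K_p$ carries the factor $(a_p-a_m)$, so when $a_p=a_m$ condition (i) of Lemma \ref{lem:cond} reads $0=0$ and is satisfied by \emph{every} $\lambda$; your passage from $(a_p-a_m)^2F(\lambda)=0$ to $F(\lambda)=0$ silently divides by this factor. Consequently your argument does not deliver $\sigmap(U)\subset\{\lambda(s_1,s_2)\mid(s_1,s_2)\in\{-1,+1\}^2\}$ in that case, and the ``moreover'' equivalence as you state it is only valid for $a_p\neq a_m$. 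The paper closes this case by switching to condition (iii) of Lemma \ref{lem:cond}: when $a_p=a_m$ one has $K_p=K_m$, $K_p'=K_m'$, $J_p=J_m$, hence $s_p=s_m$, so (iii) would force $s_ps_m=-1$, which is impossible; therefore $\sigmap(U)=\emptyset$ and the inclusion holds vacuously. Adding this one-line case distinction completes your proof.
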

\begin{proof}
Firstly, we will show $\{\lambda(s_1, s_2)\mid (s_1,s_2)\in\{-1, +1\}^2\}\subset\Lambda.$
By the Lemma \ref{lem:Lambdacheck}, for $\lambda = \lambda(s_1, s_2),$ it is sufficient to prove $\lr{\lambda + \lambda^{-1}-2pa_\ast \cosh(2\gamma)}^2 - 4|b_\ast q|^2> 0.$
Here, \eqref{eq:J^2-4KK'} and direct calculation give
\begin{align*}
\lr{\lambda + \lambda^{-1}-2pa_\ast \cosh(2\gamma)}^2 - 4|b_\ast q|^2 &= J_\ast^2-4K_\ast K'_\ast
\\
&= 4\lr{ p\cosh( 2\gamma ) -s_{2} a_\ast \sqrt{|q|^2+p^{2}\cosh^{2}( 2\gamma )}}^{2}.
\end{align*}
Thus, the above inequality is always true except $p\cosh(2\gamma)/|q| \neq s_2 a_\ast/|b_\ast|$ case, so $\{\lambda(s_1, s_2)\mid (s_1,s_2)\in\{-1, +1\}^2\}\subset\Lambda$ is proved.

Next, we will show $\sigmap(U)\subset\{\lambda(s_1, s_2)\mid (s_1,s_2)\in\{-1, +1\}^2\}.$
Lemma \ref{lem:condition_ev} and Lemma \ref{lem:cond} say the condition (i) in Lemma \ref{lem:cond} is a necessary condition for $\lambda\in\sigmap(U).$
By direct calculation, we have
\begin{align*}
 K_{m} K_{p}^{\prime } -K_{m}^{\prime } K_{p} &=p( a_{m} -a_{p})\left( e^{-2\gamma } \lambda ^{-1} -e^{2\gamma } \lambda \right) ,\ \\
 K_{m} J_{p} -K_{p} J_{m} &=( a_{m} -a_{p})\left( e^{-2\gamma } \lambda ^{-1} -\left( 2p^{2}\sinh( 2\gamma ) +e^{-2\gamma }\right) \lambda \right) ,\\
 K_{m}^{\prime } J_{p} -K_{p}^{\prime } J_{m} &=( a_{m} -a_{p})\left( e^{2\gamma } \lambda +\left( 2p^{2}\sinh( 2\gamma ) -e^{2\gamma }\right) \lambda ^{-1}\right) .
\end{align*}
If $a_{p} =a_{m}$ holds, then $K_p = K_m$ and $J_p = J_m$ give (iii) in Lemma \ref{lem:cond} becomes $s_ms_p=-1,$ and it does not hold because of $s_m=s_p.$ Hence, $\lambda\not\in\sigmap(U)$ for $a_{p} = a_{m}$ case.
Thus, we assume $a_{p} \neq a_{m},$ then (i) is calculated as follows:
\begin{align*}
\left( \lambda - \lambda ^{-1}\right)^{2} -4p^{2}\sinh^{2}( 2\gamma )=0.
\end{align*}
We obtain $\lambda(s_1, s_2)$ for any $(s_1, s_2)\in\{-1, +1\}^2$ is a solution to the equation.
\end{proof}

We are now in a position to state the main theorem of the present article;
\begin{theorem}
\label{main_theorem}
Let $\mathbb{R}_+, \mathbb{R}_-$ denote the set of positive and negative real numbers respectively, and let $\lambda_\pm:\mathbb{R}\rightarrow\mathbb{R}_\pm$ be two increasing bijections defined by $\lambda_\pm(x)=x\pm\sqrt{1+x^2}$ for each $x\in\mathbb{R}.$ Let $\sigma_\pm(U)$ be two subsets of $\mathbb{R}_\pm$ defined by following formula;
\begin{equation}
\label{equation: main_theorem}
\sigma_{\pm}(U)= \begin{cases}\left\{\lambda_{\pm}(+p \sinh (2 \gamma))\right\}, & a_m<\pm p_{\gamma}^{\prime}<a_p, \\ \left\{\lambda_{\pm}(-p \sinh (2 \gamma))\right\}, & a_p<\pm p_{\gamma}^{\prime}<a_m, \\ \emptyset, & \text { otherwise,}\end{cases}
\end{equation}
where the constant $p_{\gamma }^{\prime} \in (-1,1)$ is defined by
\[
p_{\gamma }^{\prime } := \frac{p}{\sqrt{p^{2} +|q|^{2}\cosh^{-2}( 2\gamma )}}.
\]
Then $\sigmap(U) = \sigma_-(U) \cup \sigma_+(U).$
\end{theorem}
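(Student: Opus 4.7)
The plan is to check conditions (ii) and (iii) of Lemma \ref{lem:cond} for each of the four candidates $\lambda(s_1, s_2)$ furnished by Lemma \ref{lem:lambda}, and then translate the resulting sign constraints into the piecewise description in \eqref{equation: main_theorem}. Since Lemma \ref{lem:lambda} already verifies condition (i) of Lemma \ref{lem:cond} at each candidate, no further work is needed there.

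I would first specialise every quantity appearing in Lemma \ref{lem:cond} at $\lambda = \lambda(s_1, s_2)$. Setting $R := \sqrt{1 + p^2\sinh^2(2\gamma)} = \sqrt{|q|^2 + p^2\cosh^2(2\gamma)}$, one has $\lambda + \lambda^{-1} = 2 s_2 R$ and $p\cosh(2\gamma) = p_\gamma' R$. The factor $\lambda + \lambda^{-1} - 2pa_*\cosh(2\gamma) = 2R(s_2 - p_\gamma' a_*)$ has sign $s_2$ since $|p_\gamma' a_*| < 1$, so $s_p = s_m = s_2$. The identity used in the proof of Lemma \ref{lem:lambda} then gives $\sqrt{J_*^2 - 4K_*K_*'} = 2R|p_\gamma' - s_2 a_*|$. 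Expanding $K_* = \sinh(2\gamma)(s_1 p^2 + a_*) + (s_2 p R - a_*\cosh(2\gamma))$ (and the analogous formula for $K_*'$), a direct computation yields the factorisation
\[
2K_m K_p' + 2K_p K_m' - J_p J_m = -4R^2\,(p_\gamma' - s_2 a_m)(p_\gamma' - s_2 a_p).
\]
Since $s_p s_m = 1$ and both square roots are strictly positive for $\lambda \in \Lambda$, condition (iii) reduces to $(p_\gamma' - s_2 a_m)(p_\gamma' - s_2 a_p) < 0$, which says that $s_2 p_\gamma'$ lies strictly between $a_m$ and $a_p$.

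To handle condition (ii) I would derive the elementary identities $K_m - K_p = (a_p - a_m)e^{-2\gamma}$ and $K_m a_p - K_p a_m = (a_p - a_m)p\lambda$, which hold for any $\lambda$ and follow by inspection. Writing $\eta_* := \sgn(p_\gamma' - s_2 a_*)$ and using $\eta_m \eta_p = -1$ from (iii), these two identities collapse the left-hand side of (ii) into
\[
K_m|p_\gamma' - s_2 a_p| + K_p|p_\gamma' - s_2 a_m| = \eta_m(a_p - a_m)\bigl[s_2 p\lambda - p_\gamma' e^{-2\gamma}\bigr].
\]
Substituting $\lambda - \lambda^{-1} = 2 s_1 p\sinh(2\gamma)$ shows $K_m J_p - K_p J_m = 2 p\sinh(2\gamma)(a_p - a_m)(p\lambda + s_1 e^{-2\gamma})$, and a short computation using $R^2 - \cosh^2(2\gamma) = -|q|^2$ yields the clean link $R(s_2 p\lambda - p_\gamma' e^{-2\gamma}) = s_1 p \sinh(2\gamma)(p\lambda + s_1 e^{-2\gamma})$. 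Condition (ii) therefore reduces to $s_1 s_2 \eta_m = 1$, i.e.\ $s_1(s_2 p_\gamma' - a_m) > 0$.

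Combining (ii) and (iii), $s_2 p_\gamma'$ must lie strictly between $a_m$ and $a_p$ with the orientation fixed by $s_1$: $a_m < s_2 p_\gamma' < a_p$ when $s_1 = +1$, and $a_p < s_2 p_\gamma' < a_m$ when $s_1 = -1$. Recognising that $\lambda(s_1, s_2) = \lambda_+(s_1 p\sinh(2\gamma))$ when $s_2 = +1$ and $\lambda(s_1, s_2) = \lambda_-(s_1 p\sinh(2\gamma))$ when $s_2 = -1$, the four cases match \eqref{equation: main_theorem} exactly; the boundary cases $p_\gamma' = s_2 a_*$ are correctly excluded because they force $\lambda(s_1, s_2) \notin \Lambda$, hence $\lambda(s_1, s_2) \notin \sigmap(U)$. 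The main obstacle is the sign book-keeping in the derivation of (ii): one must strip off the absolute values $|p_\gamma' - s_2 a_*|$ consistently using (iii), and then recognise $R(s_2 p\lambda - p_\gamma' e^{-2\gamma})$ as proportional to $p\lambda + s_1 e^{-2\gamma}$; the remainder is routine expansion.
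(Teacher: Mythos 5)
Your proposal is correct and follows essentially the same route as the paper: evaluate conditions (ii) and (iii) of Lemma \ref{lem:cond} at the candidates $\lambda(s_1,s_2)$, observe $s_p=s_m=s_2$, factor the quantity in (iii) into $-4R^2(p_\gamma'-s_2a_m)(p_\gamma'-s_2a_p)$ to get the betweenness condition (your $\eta_*$ is exactly the paper's $s_*'$), and reduce (ii) to the sign constraint $s_1s_2\eta_m=1$, equivalently the paper's $s_p'=-s_1s_2$. Two minor points: the auxiliary identity you cite should read $R^2-\cosh^2(2\gamma)=-|q|^2\sinh^2(2\gamma)$ rather than $-|q|^2$, although the ``clean link'' it is meant to support is nevertheless true (it follows from $e^{-2\gamma}\left(\cosh(2\gamma)+\sinh(2\gamma)\right)=1$); and the degenerate case $p\sinh(2\gamma)=0$, in which both sides of (ii) vanish and the reduction to $s_1s_2\eta_m=1$ breaks down, deserves the explicit remark the paper gives it --- the conclusion is unaffected only because $\lambda(s_1,s_2)=s_2$ is then independent of $s_1$.
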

\begin{proof}
From Lemma \ref{lem:condition_ev} and Lemma \ref{lem:lambda}, it is sufficient to clarify the conditions to satisfy (ii) and (iii) in Lemma \ref{lem:cond} with $\lambda = \lambda(s_1, s_2).$
We set a new sign function as follows:
\[
s_{*}^{\prime } =\sgn\left( p\cosh( 2\gamma ) -s_{2} a_{*}\sqrt{1+p^{2}\sinh^{2}( 2\gamma )}\right).
\]
Here, we do not have to consider $s_{m}^{\prime } s_{p}^{\prime } \neq 0$ case because $s_\ast'  = 0$ gives $\zeta_\ast^> = \zeta_\ast^<$ and $\lambda(s_1,s_2)\not\in\Lambda.$
Since $\lambda(s_1,s_2)\in\mathbb{R},$ we can explicitly determine $s_*$ as follows:
\begin{align*}
s_* & =\sgn\left( \lambda +\lambda ^{-1} -2p a_{*}\cosh( 2\gamma)\right)\\
 & =\sgn\left( s_{2}\sqrt{p^{2}\cosh^{2}( 2\gamma ) +| q| ^{2}} -pa_{*}\cosh( 2\gamma )\right)
 \\
 &= s_2.
\end{align*}
Therefore, we get
\begin{align*}
 & 2K_{m} K_{p}^{\prime } +2K_{p} K_{m}^{\prime } -J_{p} J_{m}\\
 & =-4\left( p\cosh( 2\gamma ) -s_{2} a_{p}\sqrt{1+p^{2}\sinh^{2}( 2\gamma )}\right)\left( p\cosh( 2\gamma ) -s_{2} a_{m}\sqrt{1+p^{2}\sinh^{2}( 2\gamma )}\right)\\
 & =-s_{m}^{\prime } s_{p}^{\prime }\sqrt{J_{p}^{2} -4K_{p} K_{p}^{\prime }}\sqrt{J_{m}^{2} -4K_{m} K_{m}^{\prime }}.
\end{align*}
It gives (iii) becomes $ s_{m}^{\prime } s_{p}^{\prime } =-1.$
Here, $s_{*}^{\prime } =+1$ means
\begin{align}
 p\cosh( 2\gamma )  >s_{2} a_{*}\sqrt{p^{2}\cosh^{2}( 2\gamma ) +| q| ^{2}}.
\end{align}
If $ s_{2} a_{*} \geq 0,$ we see that $s_{*}^{\prime}=+1$ holds if and only if $p>0$ and the following holds:
\begin{align}
\label{eq:sprime1}
(| b_{*}| p\cosh( 2\gamma ) -a_{*}| q| )(| b_{*}| p\cosh( 2\gamma ) +a_{*}| q| )  > 0.
\end{align}
On the other hand, if $ s_{2} a_{*} < 0,$ then we see that $s_\ast '=+1$ holds if and only $p \geq 0$ holds or the following holds with $p<0:$
\begin{align}
\label{eq:sprime2}
  (| b_{*}| p\cosh( 2\gamma ) -a_{*}| q| )(| b_{*}| p\cosh( 2\gamma ) +a_{*}| q| ) < 0.
\end{align}
Thus, \eqref{eq:sprime1} and \eqref{eq:sprime2} are equivalent to $|b_\ast|p\cosh(2\gamma)>|a_\ast q|$ and $|b_\ast|p\cosh(2\gamma)>-|a_\ast q|,$ respectively.
By the same way, we know similar condition for $s_\ast'=-1$ version, and these are summarized as follows.
\begin{align*}
s_\ast' = +1\ \iff \ \frac{p\cosh( 2\gamma )}{|q|} > \frac{s_{2} a_{*}}{|b_{*}|},
\qquad
s_\ast' = -1\ \iff \ \frac{p\cosh( 2\gamma )}{|q|} < \frac{s_{2} a_{*}}{|b_{*}|},
\end{align*}
where $\iff$ denotes ``if and only if''.
Additionally, we define monotonically increasing and bijective function $ f$ as
\[
f( x) =\frac{x}{\sqrt{1+x^{2}}} ,\quad x\in \mathbb{R}.
\]
Since $ f\left(\frac{s_2a_{*}}{| b_{*}| }\right) =s_2a_{*}$ and $f\lr{\frac{p}{|q|}\cosh(2\gamma)}=p_\gamma',$ applying this function to above inequalities gives the condition of (iii) in Lemma \ref{lem:cond}, i.e., $s_p' s_m' = -1,$ is equivalent to the following:
\begin{itemize}
    \item $s_p' = +1$ case:\quad $s_2 a_p < p_\gamma' < s_2 a_m,$
    \item $s_p' = -1$ case:\quad $s_2 a_m < p_\gamma' < s_2 a_p.$
\end{itemize}
Note that these inequalities guarantees $\lambda(s_1, s_2)\in \Lambda$ is certain to hold by Lemma \ref{lem:lambda}.

Next, we focus the condition of (ii).
We have
\begin{align*}
 & K_{m} J_{p} -K_{p} J_{m}\\
 & =2( a_{p} -a_{m}) p\sinh( 2\gamma )\left( s_{1}\left( e^{-2\gamma } +p^{2}\sinh( 2\gamma )\right) +s_{2} p\sqrt{1+p^{2}\sinh^{2}( 2\gamma )}\right)
\end{align*}
and
\begin{align*}
 & s_{p} K_{m}\sqrt{J_{p}^{2} -4K_{p} K_{p}^{\prime }} +s_{m} K_{p}\sqrt{J_{m}^{2} -4K_{m} K_{m}^{\prime }}\\
 & =-2s_{1} s_{2} s_{p}^{\prime }( a_{p} -a_{m}) p\sinh( 2\gamma )\left( s_{1}\left( e^{-2\gamma } +p^{2}\sinh( 2\gamma )\right) +s_{2} p\sqrt{1+p^{2}\sinh^{2}( 2\gamma )}\right).
\end{align*}
Thus, (ii) holds only if $p\sinh(2\gamma)=0$ or $s_{p}^{\prime } =-s_{1} s_{2}.$
At first, we consider $s_{p}^{\prime } =-s_{1} s_{2}$ case.
Then, above mentioned condition of (iii) derives the following list:
\begin{align*}
\lambda(+1,+1) \in \sigmap ( U) & \iff 
\phantom{-}
a_m < p_\gamma ' < a_p,\\
\lambda(+1,-1) \in \sigmap ( U) & \iff 
-a_p < p_\gamma ' < -a_m,\\
\lambda(-1,+1) \in \sigmap ( U) & \iff 
\phantom{-}
a_p < p_\gamma ' < a_m,\\
\lambda(-1,-1) \in \sigmap ( U) & \iff 
-a_m < p_\gamma ' < -a_p.
\end{align*}
This list derives desired conclusion of the theorem.
The rest of the proof is $p\sinh(2\gamma)=0$ case.
In this case, we see $\lambda(s_1, s_2) = s_2,$ so the independence of $s_1$ gives the same conclusion of the other case.
Thus, the proof is completed.
\end{proof}

\section{Discussion}
\label{section: discussion}

\subsection{Symmetry protection of eigenstates} 
\label{section: bulk-edge}
What follows is a brief summary of the existing index theory for chirally symmetric unitary operators (see, for example, \cite[\textsection 2]{Matsuzawa-Seki-Tanaka-2021}). If $U$ is an abstract \textit{unitary} operator on a Hilbert space $\cH$ satisfying \eqref{equation: chiral symmetry}, then it admits the following block-operator matrix representation with respect to $\cH = \ker(\varGamma - 1) \oplus \ker(\varGamma + 1):$
\begin{equation}
\label{equation: standard representation of U} 
U = 
\begin{pmatrix}
R_1 & iQ_2 \\
iQ_1 & R_2
\end{pmatrix}_{\ker(\varGamma - 1) \oplus \ker(\varGamma + 1)},
\end{equation}
where $R_j$ is self-adjoint for each $j = 1,2,$ and where $Q_1^* = Q_2.$ With \eqref{equation: standard representation of U} in mind, we introduce the following three formal indices:
\begin{align}
\label{equation: definition of GW indices}
\ind_\pm(\varGamma, U) &:= \dim \ker (R_1 \mp 1) - \dim \ker (R_2 \mp 1), \\
\label{equation: definition of Witten index}
\ind(\varGamma, U) &:= \dim \ker Q_1 - \dim \ker Q_2,
\end{align}
where the index defined by \eqref{equation: definition of Witten index} is the one we have previously discussed in \textsection \ref{section: introduction}. A key step in making these formal indices precise lies in the following equality:
\begin{align}
\label{equation: ker U and Rj}
\ker(U \mp 1) &= \ker(R_1 \mp 1) \oplus \ker(R_2 \mp 1). 
\end{align}
Indeed, it follows from \eqref{equation: ker U and Rj} that if the eigenspace $\ker(U \mp 1)$ is finite-dimensional, then $\ind_\pm(\varGamma, U)$ defined by \eqref{equation: definition of GW indices} is a well-defined integer, and we obtain the following estimate;
\begin{equation}
\label{equation2: topological protection of bounded states}
|\ind_\pm(\varGamma,U)| \leq \dim \ker(U \mp 1).
\end{equation}
In particular, the condition $\pm 1 \notin \ess(U)$ ensures that $\ind_\pm(\varGamma, U)$ is well-defined. Note also that \eqref{equation2: topological protection of bounded states} is a refinement of \eqref{equation1: topological protection of bounded states}, since $\ind(\varGamma, U)$ defined by \eqref{equation: definition of Witten index} is given by the following equality (see \cite[Lemma 2.1 (ii)]{Matsuzawa-Seki-Tanaka-2021} for details):
\[
\ind(\varGamma, U) = \ind_-(\varGamma, U) + \ind_+(\varGamma, U),
\]
provided that $\ker(U - 1)$ and $\ker(U + 1)$ are \textit{both} finite-dimensional.

The estimate \eqref{equation2: topological protection of bounded states} can be understood as an abstract form of the \textit{symmetry protection of eigenstates} \cite{Cedzich-Grunbaum-Stahl-Velazquez-Werner-Werner-2016,Cedzich-Geib-Grunbaum-Stahl-Velazquez-Werner-Werner-2018,Cedzich-Geib-Stahl-Velazquez-Werner-Werner-2018}. The present article is motivated by the desire to obtain a variant of \eqref{equation2: topological protection of bounded states} for \textit{non-unitary} $U.$ The hindrance of this non-unitary setting is two-fold. Firstly, the two indices $\ind_\pm(\varGamma, U)$ on the left hand side of \eqref{equation2: topological protection of bounded states} become ill-defined, if $U$ is non-unitary. That is, we need to extend the formula \eqref{equation: definition of GW indices} to the non-unitary setting in a mathematically rigorous fashion, but this is an open problem to the best of authors' knowledge. Secondly, the eigenspace $\ker(U \mp 1)$ on the right hand side of \eqref{equation2: topological protection of bounded states} also requires a certain non-trivial modification. The main theorem of the present article, Theorem \ref{main_theorem}, gives us some insight as to how this generalisation can be done under the setting of the two-phase non-unitary split-step quantum walk on the one-dimensional integer lattice.

\subsection{The spectral mapping theorem for chirally symmetric non-unitary operators} 
\label{section: smt}


Let us briefly recall the \textit{spectral mapping theorem for chirally symmetric unitary operators} \cite{Higuchi-Konno-Sato-Segawa-2014,Segawa-Suzuki-2016,Segawa-Suzuki-2019}. Let $\cH$ be an underlying separable Hilbert space, and let $U$ be an abstract unitary operator on $\cH$ satisfying the chiral symmetry condition \eqref{equation: chiral symmetry}. We can then  canonically decompose the unitary self-adjoint operator $\varGamma' := \varGamma U$ as the difference $\varGamma' = \partial^*\partial - (1 - \partial^*\partial)$ for some operator $\partial$ from $\cH = \ker(\varGamma' - 1) \oplus \ker(\varGamma' + 1)$ into an auxiliary Hilbert space $\cK,$ satisfying $\partial \partial^* = 1$ (see, for example, \cite[Lemma 3.3]{Suzuki-2019}). Here, $\partial^*\partial$ (resp. $1 - \partial^*\partial$) is the orthogonal projection onto $\ker(\varGamma' - 1)$ (resp. $\ker(\varGamma' + 1)$). The spectral mapping theorem states that for each eigenvalue $z$ of $U,$ we have
\begin{equation}
\label{equation: spectral mapping theorem for eigenvalues}
\dim \ker(U - z)
=
\begin{cases}
\dim \ker(\partial \varGamma \partial^* \mp 1) + \dim \left(\ker(\varGamma \pm 1) \cap \ker \partial\right), & z = \pm 1, \\
\dim \ker\left(\partial \varGamma \partial^* - \frac{z + z^*}{2} \right), & \mbox{otherwise,}
\end{cases}
\end{equation}
where the spectrum of the self-adjoint operator $T := \partial \varGamma \partial^*$ is a subset of $[-1,1],$ since $\|T\| \leq 1$ immediately follows from $\|\varGamma\| = 1$ and from $\|\partial\|^2 = \|\partial^*\|^2 = \|\partial^* \partial\| = 1.$ If $z \neq \pm 1,$ then the formula \eqref{equation: spectral mapping theorem for eigenvalues} has a visual interpretation (see Figure \ref{figure: smt}).
\begin{figure}
\centering
\begin{tikzpicture}
\begin{axis}[ticks=none, xmin=-1.5, xmax=1.5, ymin= -1.5, ymax=1.5, legend pos = north west, axis lines=center, xlabel=$\Re$, ylabel=$\Im$, xlabel style={anchor = north}
, width = 0.4\textwidth, height = 0.4\textwidth, clip=false
]
	\addplot [domain=0:2*pi,samples=50, smooth]({cos(deg(x))},{sin(deg(x))}); 
	\node[circle,fill, inner sep= 1.5pt, label=above:$z$] at ({1/2}, {sqrt(3)/2}) { };
	\node[circle,fill, inner sep= 1.5pt, label=below:$z^*$] at ({1/2}, {-sqrt(3)/2}) { };
	\node[circle, fill, inner sep= 1.5pt] at ({1/2}, 0) { };
 	\node at ({1/4}, {1/4}) {$\frac{z + z^*}{2}$};
    
    \draw[-stealth,densely dashed] ({1/2}, {sqrt(3)/2}) -- ({1/2}, {0.05}); %
    \draw[-stealth,densely dashed] ({1/2}, {-sqrt(3)/2}) -- ({1/2}, {-0.05}); %
\end{axis}
\end{tikzpicture}
\caption{
Given the chirally symmetric unitary operator $U  = \varGamma \varGamma',$ where $\varGamma' = \partial^*\partial - (1 - \partial^*\partial),$ we have that $z \in \T$ is an eigenvalue of $U$ if and only if so is $z^*.$  In this case, their real part $(z + z^*)/2$ turns out to be an eigenvalue of the self-adjoint operator $T = \partial \varGamma \partial^*,$ provided that $z \neq \pm 1.$
}
\label{figure: smt}
\end{figure}
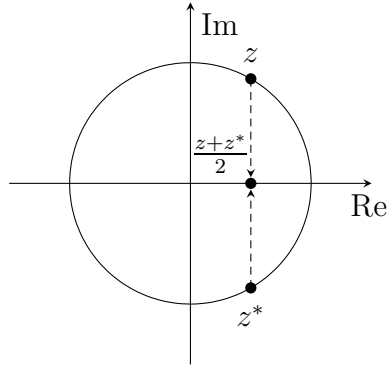

It is not known whether or not \eqref{equation: spectral mapping theorem for eigenvalues} can be naturally extended to the setting of non-unitary $U,$ since the spectrum of $U$ can be any subset of the complex plane $\C.$ This is a gap in the existing mathematics literature. With the notation introduced in Theorem \ref{main_theorem}, we can view $\gamma$ as a continuous variable. If $\gamma = 0,$ then $U$ is unitary, and we can choose and fix $p, a_m, a_p \in (-1,1)$ to ensure that either $-1$ or $+1$ becomes an eigenvalue of the unitary operator $U.$ As we continuously alter the value of $\gamma$ from $\gamma = 0,$ the operator $U$ becomes non-unitary, but we can still keep track of the continuous movement of this eigenvalue on the real axis according to the explicit formula \eqref{equation: main_theorem}. 


\newcommand{\etalchar}[1]{$^{#1}$}

\end{document}